\newtheorem{prop}{Proposition}
\newtheorem{definition}{Definition}
\newtheorem{theorem}{Theorem}
\def\BibTeX{{\rm B\kern-.05em{\sc i\kern-.025em b}\kern-.08em
    T\kern-.1667em\lower.7ex\hbox{E}\kern-.125emX}}
\begin{document}

\title{Risk-Aware Value-Oriented Net Demand Forecasting for Virtual Power Plants}

\author{
\IEEEauthorblockN{Yufan Zhang*}
\IEEEauthorblockA{\textit{Electrical and Computer Engineering} \\
\textit{University of California, San Diego}\\
yuz254@ucsd.edu}
\and
\IEEEauthorblockN{Jiajun Han*}
\IEEEauthorblockA{\textit{Electrical Engineering} \\
\textit{Columbia University}\\
jh4316@columbia.edu}
\and
\IEEEauthorblockN{Yuanyuan Shi}
\IEEEauthorblockA{\textit{Electrical and Computer Engineering} \\
\textit{University of California, San Diego}\\
yyshi@ucsd.edu}
}

\maketitle

\begin{abstract}
This paper develops a risk-aware net demand forecasting product for virtual power plants, which helps reduce the risk of high operation costs. At the training phase, a bilevel program for parameter estimation is formulated, where the upper level optimizes over the forecast model parameter to minimize the conditional value-at-risk (a risk metric) of operation costs. The lower level solves the operation problems given the forecast. Leveraging the specific structure of the operation problem, we show that the bilevel program is equivalent to a convex program when the forecast model is linear. Numerical results show that our approach effectively reduces the risk of high costs compared to the forecasting approach developed for risk-neutral decision makers. 
\end{abstract}

\begin{IEEEkeywords}
Net demand forecasting, Decision-focused learning, Risk management
\end{IEEEkeywords}

\section{Introduction}

With the increasing penetration of Renewable Energy Sources (RESs) in demand side, electricity consumption becomes the net demand, which is calculated by subtracting RES production from the non-dispatchable electricity load. The inherent variability and unpredictability of RESs make the future trend of net demand hard to predict. Improving its forecast accuracy has been the main focus of forecasters. Efforts have been put into developing forecast models by leveraging cutting-edge techniques such as deep learning \cite{alipour2020novel}. 

The predicted net demand is usually used as a parameter in a decision-making problem, such as the energy dispatch problem for scheduling the generators \cite{morales2013integrating}. While forecast accuracy is a useful metric for assessing the statistical quality of forecasts, users are primarily concerned with maximizing the benefits derived from their use, i.e., the decision value. Recent studies show that the statistical quality may not be consistent with the decision value \cite{murphy1993good}. Therefore, research has been conducted on training forecast models aimed at maximizing decision value. Approaches such as integrated optimization \cite{morales2023prescribing}, differentiable programming \cite{donti2017task}, and loss function design \cite{zhang2024improving} have been employed; see the review in \cite{zhang2024improving}. The above studies issue value-oriented forecasts for maximizing the expected decision values. However, such practice takes the perspective of \emph{risk-neutral} decision-makers and fails to protect them from rare but significant losses.

This motivates us the following technical question: \textit{Is it possible to train forecast models to maximize decision values while accounting for the risks arising from uncertainties in decision processes?}

To answer this question, we propose a risk-aware training approach for estimating forecast model parameters for a Virtual Power Plant (VPP) operator. The decision value is defined as minimizing the Day-Ahead (DA) and Real-Time (RT) overall operation cost for the operator. Considering the operator is not risk-neutral, a different objective than the minimization of the expected overall cost is sought at the training phase. Generally speaking, the proper objective for a risk-aware operator penalizes the highest overall operation costs. Therefore, minimizing the Conditional Value-at-Risk (CVaR) (a widely-used metric for quantifying risk \cite{rockafellar2000optimization}) of the overall operation costs is used as the training objective. Using such an objective, a bilevel program is built for parameter estimation, where the upper level optimizes over the forecast model parameter, and the lower level solves the DA and RT operation problems for each sample. We leverage the specific structure of the lower-level operation problems and show that the relationship between the overall operation cost and the forecast is a convex function. With such a function, the bilevel program can be transformed into a convex program when the forecast model is linear, which is efficient to solve by the off-the-shelf solvers. Our main contributions are,

1) Propose a risk-aware forecast model parameter estimation approach at the training phase, which effectively immunizes the operator from high operation costs.

2) Analytically derive a convex function between the forecast and the overall operation cost, enabling a convex parameter estimation program when the forecast model is linear. This ensures global optimality.

\section{Preliminaries}

\subsection{Day-ahead and Real-time Operations of a Virtual Power Plant (VPP) Operator}

In this work, we consider the sequential operation of a VPP operator, which must ensure the satisfaction of net demand, i.e., the demand minus RESs (where the demand can include the load in VPP or the power exchange with others). In DA, the VPP operator determines the energy dispatch of slow-start generators for each time-slot $\tau$ on the next day $d$. Such decisions are made based on the net demand forecast $\hat{y}_{d,\tau}$ at time $t$ on day $d-1$, before knowing true net demand. After the net demand realization $y_{d,\tau}$ is revealed, the operator is responsible for deficit or surplus of power imbalance at each time-slot $\tau$ on the next day $d$. The operator can compensate for the imbalance, by resorting to the flexible resources. 
Next, we detail the DA and RT operations.
\vspace{-0.2em}
\subsubsection{DA Operation} For any time-slot $\tau$ on the next day $d$, the VPP operator schedules the slow-start generators to minimize the generation cost. We assume that the inter-temporal constraints, such as ramping constraints, are not explicitly considered \cite{morales2023prescribing}. The DA operation problem is,
\begin{subequations}\label{1}
\begin{alignat}{2} 
&\mathop{\min}_{\bm{p}^{\text{DA}}_{d,\tau}}
  &&\ \bm{\rho}^{\text{DA},\top}\bm{p}^{\text{DA}}_{d,\tau}\\ 
& \text{s.t.}
&&    \bm{1}^\top\bm{p}^{\text{DA}}_{d,\tau}=\hat{y}_{d,\tau}:\lambda_{d,\tau}\label{1(b)}
    \\ 
    &&&  \bm{0} \leq \bm{p}^{\text{DA}}_{d,\tau} \leq \overline{\bm{p}}^{\text{DA}}:\underline{\bm{\nu}}_{d,\tau},\overline{\bm{\nu}}_{d,\tau} \label{1(c)},
\end{alignat}
\end{subequations}
where $\bm{p}_{d,\tau}^{\text{DA}}$ denotes the scheduling decisions. $\bm{\rho}^{\text{DA}}$ is the marginal generation cost and $\overline{\bm{p}}^{\text{DA}}$ is the generation upper limit. The dual variables are listed after the colons. The dual problem of \eqref{1} is,
\begin{subequations}\label{1RT}
\begin{alignat}{2} 
&\mathop{\max}_{\lambda_{d,\tau},\underline{\bm{\nu}}_{d,\tau},\overline{\bm{\nu}}_{d,\tau}}
  &&\ \lambda_{d,\tau}\hat{y}_{d,\tau}-\overline{\bm{\nu}}_{d,\tau}^\top\overline{\bm{p}}^{\text{DA}}\label{1RT(a)}\\ 
& \text{s.t.}
&&    \bm{\rho}^{\text{DA}}-\lambda_{d,\tau}\bm{1}+\overline{\bm{\nu}}_{d,\tau}-\underline{\bm{\nu}}_{d,\tau}=0\label{1RT(b)}
    \\ 
    &&&  \underline{\bm{\nu}}_{d,\tau},\overline{\bm{\nu}}_{d,\tau}\geq0 \label{1RT(c)},
\end{alignat}
\end{subequations}

\subsubsection{RT Operation} The deviation between the net demand realization $y_{d,\tau}$ and the DA forecast $\hat{y}_{d,\tau}$ is inevitable. The operator is required to compensate the deviation $y_{d,\tau}-\hat{y}_{d,\tau}$ as time close to the delivery. When energy deficit happens, i.e., $y_{d,\tau}-\hat{y}_{d,\tau} > 0$, the flexible resources output power $\bm{p}^{+}_{d,\tau}$ to compensate the deficit, at the marginal cost of $\bm{\rho}^{+}$. When energy surplus happens, i.e., $y_{d,\tau}-\hat{y}_{d,\tau} < 0$, the flexible resources absorb the excessive electricity $\bm{p}^{-}_{d,\tau}$ at the marginal utility of $\bm{\rho}^{-}$. Minimizing the cost of balancing the deviation, the operator solves RT operation problem at each time-slot $\tau$ on day $d$ as,
\begin{subequations}\label{2}
\begin{alignat}{2} 
&\mathop{\min}_{\bm{p}^{+}_{d,\tau},\bm{p}^{-}_{d,\tau}}
  &&\bm{\rho}^{+,\top}\bm{p}^{+}_{d,\tau}-\bm{\rho}^{-,\top}\bm{p}^{-}_{d,\tau}\\ 
& \text{s.t.}
&&    \bm{1}^\top(\bm{p}^{+}_{d,\tau}-\bm{p}^{-}_{d,\tau})=y_{d,\tau}-\hat{y}_{d,\tau}:\gamma_{d,\tau}\label{2(b)}
    \\ 
    &&&  \bm{0} \leq \bm{p}^{-}_{d,\tau} \leq \overline{\bm{p}}^{-}:\underline{\bm{\mu}}_{d,\tau},\overline{\bm{\mu}}_{d,\tau} \label{2(d)}\\
    &&&  \bm{0} \leq \bm{p}^{+}_{d,\tau} \leq \overline{\bm{p}}^{+}:\underline{\bm{\eta}}_{d,\tau},\overline{\bm{\eta}}_{d,\tau} \label{2(e)},
\end{alignat}
\end{subequations}
where \eqref{2(b)} ensures that the deviation is settled. The adjustment provided by flexible loads is bounded by technical limits via \eqref{2(d)}-\eqref{2(e)}, where $\overline{\bm{p}}^{+},\overline{\bm{p}}^{-}$ are the upper limits respectively. After solving \eqref{2}, the optimal RT solutions $\bm{p}^{+,*}_{d,\tau},\bm{p}^{-,*}_{d,\tau}$ are obtained. The dual variables of the RT operation are listed after the colons in \eqref{2(b)}-\eqref{2(e)}. The dual problem of \eqref{2} is,
\begin{subequations}\label{dualRT}
\begin{alignat}{2} 
&\mathop{\max}_{\Xi}
  &&\ \gamma_{d,\tau}(y_{d,\tau}-\hat{y}_{d,\tau})-\overline{\bm{\mu}}_{d,\tau}^\top\overline{\bm{p}}^{-}-\overline{\bm{\eta}}_{d,\tau}^\top\overline{\bm{p}}^{+}\label{dualRTa}\\ 
& \text{s.t.}
&&    
-\bm{\rho}^{-}+\gamma_{d,\tau}\bm{1}+\overline{\bm{\mu}}_{d,\tau}-\underline{\bm{\mu}}_{d,\tau}=0\label{dualRTc}\\
    &&&  \bm{\rho}^{+}-\gamma_{d,\tau}\bm{1}+\overline{\bm{\eta}}_{d,\tau}-\underline{\bm{\bm{\eta}}}_{d,\tau}=0 \label{dualRTd}\\
    &&& \underline{\bm{\mu}}_{d,\tau},\overline{\bm{\mu}}_{d,\tau},\underline{\bm{\eta}}_{d,\tau},\overline{\bm{\eta}}_{d,\tau} \geq 0,
\end{alignat}
\end{subequations}
where $\Xi=\{\gamma_{d,\tau},\underline{\bm{\mu}}_{d,\tau},\overline{\bm{\mu}}_{d,\tau},\underline{\bm{\eta}}_{d,\tau},\overline{\bm{\eta}}_{d,\tau}\}$. The optimal solution of \eqref{dualRT} is $\{\gamma^*_{d,\tau},\underline{\bm{\mu}}^*_{d,\tau},\overline{\bm{\mu}}^*_{d,\tau},\underline{\bm{\eta}}^*_{d,\tau},\overline{\bm{\eta}}^*_{d,\tau}\}$.

Here, we define the overall operation cost of the VPP operator in both DA and RT  at a time-slot $\tau$,
\begin{definition}
    We define the overall operation cost of the operator at time-slot $\tau$ on day $d$ as,
\begin{subequations}\label{overallcost}
\begin{align}
        &\overbrace{\bm{\rho}^{\text{DA},\top}\bm{p}^{\text{DA},*}_{d,\tau}}^\text{DA primal optimal objective}+\overbrace{\bm{\rho}^{+,\top}\bm{p}^{+,*}_{d,\tau}-\bm{\rho}^{-,\top}\bm{p}^{-,*}_{d,\tau}}^\text{RT primal optimal objective}\label{overallcosta}\\
        =&\overbrace{\lambda_{d,\tau}^{*}\cdot \hat{y}_{d,\tau}-\overline{\bm{\nu}}_{d,\tau}^{*,\top}\overline{\bm{p}}^{\text{DA}}}^\text{DA dual optimal objective}+\nonumber\\
        &\overbrace{\gamma^*_{d,\tau}(y_{d,\tau}-\hat{y}_{d,\tau})-\overline{\bm{\mu}}_{d,\tau}^{*,\top}\overline{\bm{p}}^{-}-\overline{\bm{\eta}}_{d,\tau}^{*,\top}\overline{\bm{p}}^{+}}^\text{RT dual optimal objective}\label{overallcostb},
    \end{align}\end{subequations}
\end{definition}
where \eqref{overallcosta} is the DA primal optimal objective plus RT primal optimal objective, and \eqref{overallcostb} is the DA dual optimal objective plus RT dual optimal objective. The two objectives are equal due to the strong duality theorem \cite{boyd2004convex}.
\eqref{overallcosta},\eqref{overallcostb} are related with the forecast $\hat{y}_{d,\tau}$ and the realization $y_{d,\tau}$.
\vspace{0em}

\subsection{Optimization of Conditional Value-at-Risk}

Let $\ell(x,Y)$ denote the cost function associated with the decision $x \in \mathcal{X}$ (where $\mathcal{X}$ is the constraint set) and random variable $Y$. For a specific $x \in \mathcal{X}$, $\ell(x,Y)$ is a random variable. When the realization of $Y$ is $y$. the realization of $\ell(x,Y)$ is $\ell(x,y)$. Let $\text{Var}_\beta$ denote the $\beta$-quantile of the cost $\ell(x,Y)$. The corresponding $\text{CVaR}_\beta$ is defined as the expected value of the costs that are larger or equal to $\text{VaR}_\beta$. The illustration of $\text{CVaR}_\beta$ is shown in Fig. \ref{cvar}. The goal is to find an optimal solution of $x$ to minimize $\text{CVaR}_\beta$. \cite{rockafellar2000optimization} shows that such a goal can be achieved via solving an optimization program,
\begin{equation}
    x^*,\alpha^*=\mathop{\arg\min}_{x \in \mathcal{X},\alpha}\alpha+\frac{1}{(1-\beta)M}\sum_{m=1}^M [\ell(x,y_m)-\alpha]^+,
\end{equation}
where $\{y_m\}_{m=1}^M$ are the $M$ scenarios sampled from the probability distribution of $Y$. The expresssion $[\ell(x,y_m)-\alpha]^+=\ell(x,y_m)-\alpha$ when $\ell(x,y_m) > \alpha$ and $[\ell(x,y_m)-\alpha]^+=0$ when $\ell(x,y_m) \leq \alpha$. If the cost function $\ell(x,y_m)$ is convex, the objective $\alpha+\frac{1}{(1-\beta)M}\sum_{m=1}^M [\ell(x,y_m)-\alpha]^+$ is convex \cite{rockafellar2000optimization}. The optimal solution $x^*$ minimizes the $\text{CVaR}_\beta$, and $\alpha^*$ gives the corresponding $\text{Var}_\beta$.

\begin{figure}[t]
  \centering
  \includegraphics[scale=0.3]{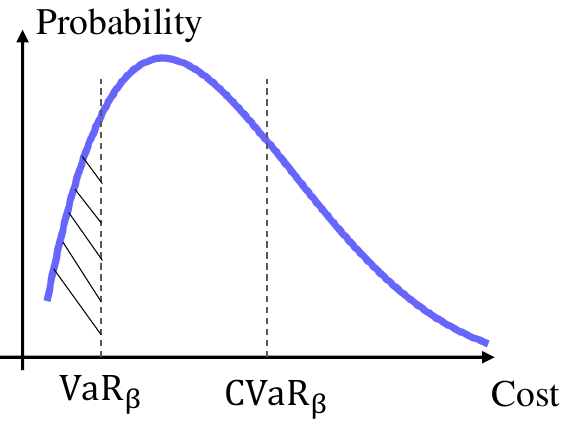}
  \caption{\scriptsize The illustration of $\text{CVaR}_\beta$. The dashed area measures $\beta$. $\text{VaR}_\beta$ is the $\beta$-quantile of the cost. $\text{CVaR}_\beta$ is the expected value of the cost above $\text{VaR}_\beta$.}
\label{cvar}
\end{figure}

\section{Risk-Aware Parameter Estimation}
In this section, we estimate the forecast model parameter to minimize the overall operation cost of the operator while considering risk. Let $g(\ \cdot \ ;\Theta)$ denote the forecast model with the parameter $\Theta$, and $\bm{s}_{d,\tau}$ denote the context. The net demand forecast for the time-slot $\tau$ on day $d$ is issued by,
\begin{equation}\label{fm}
    \hat{y}_{d,\tau}= g(\bm{s}_{d,\tau};\Theta)
\end{equation}


Given the dataset $\{\{\bm{s}_{d,\tau},y_{d,\tau}\}_{\tau=1}^T\}_{d=1}^D$, comprising historical context $\bm{s}_{d,\tau}$ and net demand realization $y_{d,\tau}$ over $D$ days, we formulate a bilevel program, where the upper level optimizes over the forecast model parameter aware of the risk of high operation costs. The lower level solves the DA and RT operation problems. The bilevel program is as follows,
\begin{subequations}\label{bilevel}
\begin{alignat}{2} 
&\underset{{\Theta},\alpha}{\mathop{\min}}&& \  \alpha+\frac{1}{D\cdot T\cdot(1-\beta)}\sum_{d=1}^{D}\sum_{\tau=1}^{T}[\eqref{overallcosta}-\alpha]^+\label{bilevel_a}\\
    & \text{s.t.} 
    && \hat{y}_{d,\tau}=g(\bm{s}_{d,\tau};\Theta),\forall \tau=1,...,T,\forall d=1,...,D\label{bilevel_b}\\     &&&\begin{rcases}\eqref{1},\forall \tau=1,...,T,\forall d=1,...,D\\
    \eqref{2},\forall \tau=1,...,T,\forall d=1,...,D\end{rcases}\text{Lower level}\label{bilevel_e},
\end{alignat}
\end{subequations}

To reduce the risk of high costs, the upper level optimizes over the forecast model parameter $\Theta$ toward minimizing $\text{CVaR}_\beta$ of the overall operation cost in the training set, as stated in \eqref{bilevel_a}. This overall operation cost, given in \eqref{overallcosta}, is informed by the lower-level solutions in \eqref{bilevel_e}.  The degree of risk aversion is controlled by $\beta$. When the operator is more risk-averse regarding high costs, $\beta$ is set to a larger value. We have the following proposition for the extreme case where $\beta=0$.
\begin{prop}\label{prop1}
    When the operator sets $\beta=0$, the operator is risk-neutral. The forecast model parameter is trained toward minimizing the expected overall operation cost. 
\end{prop}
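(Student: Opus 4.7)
The plan is to show that when $\beta=0$ the inner minimization over $\alpha$ in \eqref{bilevel_a} collapses, so that the upper-level objective reduces to the empirical average of the overall operation costs over the training set; this is exactly the risk-neutral value-oriented training criterion.

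First, I would fix $\Theta$ and view \eqref{bilevel_a} as a function of $\alpha$ alone. Writing $c_{d,\tau}(\Theta)$ for the overall operation cost in \eqref{overallcosta} induced by $\hat{y}_{d,\tau}=g(\bm{s}_{d,\tau};\Theta)$ and the realization $y_{d,\tau}$, the upper-level objective becomes
\[
f(\alpha;\Theta)=\alpha+\frac{1}{DT}\sum_{d=1}^{D}\sum_{\tau=1}^{T}\bigl[c_{d,\tau}(\Theta)-\alpha\bigr]^+.
\]
I would then observe that $f(\cdot;\Theta)$ is convex and piecewise linear in $\alpha$, and is bounded below by the empirical mean $\bar{c}(\Theta)=\frac{1}{DT}\sum_{d,\tau}c_{d,\tau}(\Theta)$ since $[z]^+\geq z$ for every real $z$. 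Crucially, as soon as $\alpha\leq\min_{d,\tau}c_{d,\tau}(\Theta)$ every hinge term becomes linear in $\alpha$, the $\alpha$ contributions cancel, and the value of $f(\alpha;\Theta)$ is exactly $\bar{c}(\Theta)$. Hence $\min_\alpha f(\alpha;\Theta)=\bar{c}(\Theta)$.

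Second, I would substitute this inner minimum back into \eqref{bilevel}. Since the lower-level problems in \eqref{bilevel_e} and the forecast equation \eqref{bilevel_b} are unaffected, the bilevel program reduces to $\min_\Theta \bar{c}(\Theta)$, i.e.\ the minimization of the empirical expected overall operation cost, which is the standard training objective used for a risk-neutral operator. The hard part, if any, is the justification that $\alpha$ can be driven below $\min_{d,\tau}c_{d,\tau}(\Theta)$ without violating any constraint, but this is immediate here because $\alpha$ is unconstrained in \eqref{bilevel} and is decoupled from $\Theta$. The conclusion is also consistent with the general fact from \cite{rockafellar2000optimization} that $\text{CVaR}_\beta$ degenerates to the expectation as $\beta\downarrow 0$, so no deeper machinery is required.
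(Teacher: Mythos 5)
Your proposal is correct and follows essentially the same route as the paper's proof: both arguments observe that at $\beta=0$ the optimal $\alpha$ sits at (or below) the minimum sample cost, so every hinge term $[\,\cdot\,]^+$ becomes linear, the $\alpha$'s cancel, and the objective collapses to the empirical mean of the overall operation cost. Your version is in fact slightly more complete, since you also supply the lower bound $f(\alpha;\Theta)\geq\bar{c}(\Theta)$ via $[z]^+\geq z$, which justifies that this choice of $\alpha$ is indeed the minimizer — a step the paper asserts without argument.
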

\begin{proof}
    When $\beta=0$, $\alpha$ is the minimum of the overall operation cost \eqref{overallcost} in the test set. Therefore, $\forall \tau=1,...,T,d=1,...,D$, $[\eqref{overallcost}-\alpha]^+=\eqref{overallcost}-\alpha$. Then \eqref{bilevel_a} is equivalent with $\frac{1}{D\cdot T}\sum_{d=1}^{D}\sum_{\tau=1}^{T}\eqref{overallcost}$, i.e., the expected operation cost.
\end{proof}

Proposition \ref{prop1} shows that when $\beta=0$, \eqref{bilevel_a} is equivalent to the expected overall operation cost. The bilevel program for parameter estimation in \eqref{bilevel} is therefore reduced to the one proposed in \cite{zhang2024improving} for risk-neutral operators.

To enhance the connection between the forecast and the overall operation cost, we replace \eqref{overallcosta} with \eqref{overallcostb}, and substitute the lower-level RT primal problem with its dual counterpart. The bilevel program in \eqref{bilevel} is reformulated as,
\begin{subequations}\label{Dbilevel}
\begin{alignat}{2} 
&\underset{{\Theta},\alpha}{\mathop{\min}}&& \  \alpha+\frac{1}{D\cdot T\cdot(1-\beta)}\sum_{d=1}^{D}\sum_{\tau=1}^{T}[\eqref{overallcostb}-\alpha]^+\label{Dbilevel_a}\\
    & \text{s.t.} \notag
    && \eqref{bilevel_b}\\
    &&&\begin{rcases}\eqref{1RT},\forall \tau=1,...,T,\forall d=1,...,D\\
    \eqref{dualRT},\forall \tau=1,...,T,\forall d=1,...,D\end{rcases}\text{Lower level}\label{Dbilevel_e},
\end{alignat}
\end{subequations}

Solving \eqref{Dbilevel} remains challenging. A common solution approach involves substituting the lower level with its KKT conditions, which are converted to mixed-integer constraints. The number of mixed-integer constraints grows as the number of training set samples increases, rendering the hard-to-solve large-scale mixed-integer program. In the upcoming section, we'd like to suggest deriving an analytical function linking the lower-level dual solutions and the forecast. This approach allows for the elimination of the lower level. Substituting this function into \eqref{overallcostb} yields the relationship between the forecast and the overall cost. We will then demonstrate the properties of this function, which aid in solving \eqref{Dbilevel}.

\section{Deriving the Function between the Forecast and the Overall Operation Cost}
 In this section, we derive the analytical relationship between the forecast $\hat{y}_{d,\tau}$ and the overall operation cost as given in \eqref{overallcostb}. Given the involvement of dual solutions in \eqref{overallcostb}, we first derive the analytical function between the optimal dual solutions $\{\lambda_{d,\tau}^{*},\overline{\bm{\nu}}_{d,\tau}^*,\gamma^*_{d,\tau},\overline{\bm{\mu}}^*_{d,\tau},\overline{\bm{\eta}}^*_{d,\tau}\}$ and the forecast $\hat{y}_{d,\tau}$. These relationships are determined by the DA and RT dual problems in \eqref{1RT},\eqref{dualRT}, which are linear programs where the forecast $\hat{y}_{d,\tau}$ serves as a parameter. For general linear programs \eqref{lp} (where $\bm{x}$ is the primal variable, $\bm{c},\bm{G},\bm{\psi},\bm{F}$ are the fixed parameters, $\bm{\omega}$ is the parameter of interest) and its dual problem \eqref{dualG} (where $\bm{\sigma}$ is the dual variable), \cite{borrelli2003geometric} outlines the properties of the functions linking the optimal dual solutions and the optimal objective to the parameters.
\begin{subequations}\label{lp}
\begin{alignat}{2}
 \bm{x}^*=&\mathop{\arg\min}_{\bm{x}} && \quad \bm{c}^\top  \bm{x}\label{lpa}\\
    &\text{s.t.} &&  \quad  \bm{G}\bm{x}\leq \bm{\psi}+\bm{F}\bm{\omega}:\bm{\sigma}\label{lpb}.
\end{alignat}
\end{subequations}

\begin{equation}\label{dualG}
     \bm{\sigma^*}=\mathop{\arg\max}_{\bm{\sigma} \geq 0} -\bm{\sigma}^\top(\bm{\psi}+\bm{F}\bm{\omega}) 
 \end{equation}

 \begin{theorem}\label{theom1}
     \cite{borrelli2003geometric} Consider general linear programs \eqref{dualG}. The function, which describes the change in the optimal dual solutions $\bm{\sigma^*}$ as the parameter changes $\bm{\omega}\in \Omega$ (where $\Omega$ is a convex polytope), is a stepwise function. The relationship between the optimal objective $-\bm{\sigma}^{*\top}(\bm{\psi}+\bm{F}\bm{\omega})$ and the parameter $\bm{\omega}\in \Omega$ is a piecewise linear and convex function. 
 \end{theorem}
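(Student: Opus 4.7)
The plan is to invoke standard multiparametric linear programming theory and partition $\Omega$ into a finite collection of polyhedral ``critical regions'' on each of which the identity of the optimal dual vertex is constant. The crucial observation is that the dual feasible set in \eqref{dualG} is a polyhedron that does not depend on $\bm{\omega}$; only the linear objective depends on $\bm{\omega}$. Since a linear function on a polyhedron attains its maximum at some vertex, I would enumerate the vertices of the dual feasible set as $\bm{\sigma}_1,\dots,\bm{\sigma}_K$ (finitely many under mild non-degeneracy of the feasible polytope).

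First I would define, for each vertex $\bm{\sigma}_k$, the region
\begin{equation*}
R_k\ =\ \bigl\{\bm{\omega}\in\Omega:\ -\bm{\sigma}_k^\top(\bm{\psi}+\bm{F}\bm{\omega})\ \geq\ -\bm{\sigma}_j^\top(\bm{\psi}+\bm{F}\bm{\omega}),\ \forall j\neq k\bigr\}.
\end{equation*}
Each defining inequality is affine in $\bm{\omega}$, so $R_k$ is polyhedral; since some vertex is always optimal, the $R_k$'s cover $\Omega$, and their interiors are disjoint. On the interior of $R_k$ the maximizer is uniquely $\bm{\sigma}_k$, so the map $\bm{\omega}\mapsto\bm{\sigma}^*(\bm{\omega})$ is piecewise constant across the partition, which is exactly the stepwise claim in the theorem.

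Next I would establish the claims about the optimal value. On $R_k$, the optimal objective equals the affine function $V_k(\bm{\omega}):=-\bm{\sigma}_k^\top(\bm{\psi}+\bm{F}\bm{\omega})$; globally,
\begin{equation*}
V(\bm{\omega})\ =\ \max_{k=1,\dots,K}\ V_k(\bm{\omega}),
\end{equation*}
because each vertex is a dual-feasible candidate and at every $\bm{\omega}$ some vertex achieves the maximum. A pointwise maximum of finitely many affine functions is piecewise linear by definition and convex because the supremum of any family of affine functions is convex. This yields the second claim.

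The main obstacle will be handling \emph{degeneracy} cleanly: at a point on the common boundary of two or more $R_k$'s, several vertices tie for the optimum, so $\bm{\sigma}^*(\bm{\omega})$ is set-valued and the stepwise function needs a tie-breaking rule (e.g.\ lexicographic) to be well-defined on those zero-measure boundaries. A secondary technicality is ensuring the dual is feasible and bounded for every $\bm{\omega}\in\Omega$ so that the vertex enumeration applies uniformly; this follows once the primal LP is assumed feasible and bounded over $\Omega$, which is the regime in which the cited result \cite{borrelli2003geometric} is established.
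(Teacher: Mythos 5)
Your proposal is correct, but note that the paper itself offers no proof of this statement at all: Theorem~\ref{theom1} is stated purely as a citation of \cite{borrelli2003geometric}, whose standard development characterizes critical regions through the \emph{active constraint sets} (optimal bases) of the primal problem \eqref{lp}. Your argument takes a different and arguably more direct route for the dual-side claims: you observe that the dual feasible set is independent of $\bm{\omega}$, enumerate its vertices, and obtain the value function as a finite pointwise maximum of affine functions, which immediately gives convexity and piecewise linearity, with piecewise constancy of the maximizer on each region $R_k$. This buys a short, self-contained proof that needs no primal machinery; what the active-set formulation buys in exchange is an explicit algorithmic description of the partition (which the paper actually exploits later when it constructs the DA and RT partitions from the merit order), and it handles the primal solution map as well. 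Two small points to tighten: the claim that the $R_k$ have disjoint interiors fails if two distinct vertices induce the same affine objective (then $R_k=R_j$), though this does not affect the value-function conclusions; and vertex attainment requires the dual polyhedron to be pointed, which holds here precisely because of the sign constraint $\bm{\sigma}\geq 0$ (together with the equality constraint $\bm{G}^\top\bm{\sigma}=-\bm{c}$ that the paper's display \eqref{dualG} omits but clearly intends, since otherwise the problem is trivial). You already flag the remaining caveats (tie-breaking on boundaries, dual boundedness over $\Omega$) appropriately.
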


Theorem \ref{theom1} shows that there exists a set of polyhedral partitions $R_1,...,R_N$ of $\Omega$, $\forall \bm{\omega} \in R_i$, the function between the optimal dual solution and $\bm{\omega}$ is a constant function, whose value $\bm{\sigma}^{*i}$ can be obtained by solving the dual problem in \eqref{dualG} given any $\bm{\omega} \in R_i$. The corresponding linear function between the optimal objective and the parameter $\bm{\omega} \in R_i$ can be obtained, which is $-\bm{\sigma}^{*i,\top}(\bm{\psi}+\bm{F}\bm{\omega})$. \cite{borrelli2003geometric} shows how to divide $\Omega$ into a set of partitions $R_1,...,R_N$ via the active constraints in the primal problem  \eqref{lp} in a general sense. For the specific DA operation problem in \eqref{1},\eqref{1RT} and the RT operation problem in \eqref{2},\eqref{dualRT}, we leverage the structure to obtain polyhedral partitions and subsequently obtain the functions that link the optimal overall cost to the parameters.
\begin{figure}[t]
  \centering
  \includegraphics[scale=0.49]{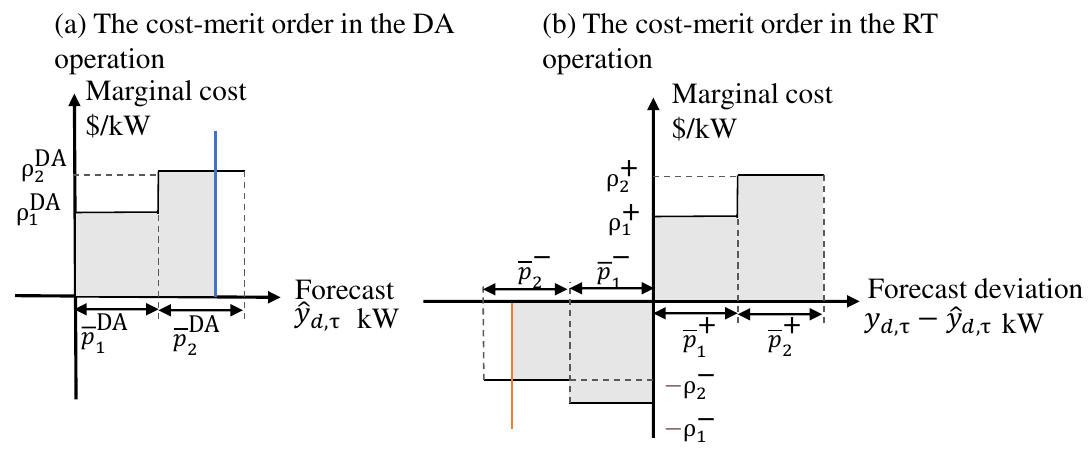}
  \caption{\scriptsize Illustration of a cost-merit order dispatch. (a) In the DA operation, there are two generators with marginal costs $\rho_1^{\text{DA}},\rho_2^{\text{DA}}$ and capacities $\overline{p}_1^{\text{DA}},\overline{p}_2^{\text{DA}}$. The blue solid line represents the net demand forecast $\hat{y}_{d,\tau}$. Since the generator 1 is the cheapest one, it is dispatched to full capacity. The generator 2 is partially dispatched until the net demand is settled. (b) In the RT operation, flexible resources 1 and 2 addressing energy deficit with marginal cost $\rho_1^{+},\rho_2^{+}$ and capacities $\overline{p}_1^{+},\overline{p}_2^{+}$. Flexible resources 3 and 4 address energy surplus with marginal utility $\rho_1^{-},\rho_2^{-}$ and capacities $\overline{p}_1^{-},\overline{p}_2^{-}$. 
  The orange solid line represents the negative forecast deviation $y_{d,\tau}-\hat{y}_{d,\tau} < 0$. Since flexible resource 3 has a higher marginal utility, it is dispatched to the full capacity. The flexible resource 4 is partially dispatched until the deviation is settled.}
\label{costmerit}
\end{figure}

\subsection{Obtaining DA and RT Partitions}

We leverage the structure of the primal DA and RT problems in \eqref{1},\eqref{2} to obtain the polyhedral partitions. \eqref{1},\eqref{2} stand for economic dispatch problems, where the resources are dispatched in a cost-merit order. For instance, in the DA operation \eqref{1}, the solutions are related to the forecast $\hat{y}_{d,\tau}$. The generator with a lower marginal cost is dispatched first. In the RT operation \eqref{2}, the solutions are related to the forecast deviation $y_{d,\tau}-\hat{y}_{d,\tau}$. When the operator faces an energy deficit, i.e.,  
 $y_{d,\tau}-\hat{y}_{d,\tau} >0$, and flexible resources output electricity, the resources with lower marginal cost are resorted first. When the operator faces an energy surplus, i.e., $y_{d,\tau}-\hat{y}_{d,\tau} <0$, and flexible resources absorb electricity, the resources with higher marginal utility are resorted first. This is illustrated in Fig. \ref{costmerit}. The cost-merit order indicates that the partitions are related to the order of the marginal cost.

 To ease the discussion that follows, 
 we consider the generator dispatch in $\bm{p}^{\text{DA}}_{d,\tau},\overline{\bm{p}}^{\text{DA}}$ are ordered, such that $i \leq j$ if and only if $\rho_i^{\text{DA}} \leq \rho_j^{\text{DA}}$, where $\rho_i^{\text{DA}},\rho_j^{\text{DA}}$ are the $i_{th}$ and $j_{th}$ elements in $\bm{\rho}^{\text{DA}}$. Additionally, resources in $\bm{p}^{+}_{d,\tau}$, $\overline{\bm{p}}^{+}$ are ordered, such that $i \leq j$ if and only if $\rho_i^{+} \leq \rho_j^{+}$, where $\rho_i^{+},\rho_j^{+}$ are the $i_{th}$ and $j_{th}$ elements in $\bm{\rho}^{+}$. Likewise,
 the resources in $\bm{p}^{-}_{d,\tau}$, $\overline{\bm{p}}^{-}$ are ordered, such that $i \leq j$ if and only if $\rho_i^{-} \geq \rho_j^{-}$, where $\rho_i^{-},\rho_j^{-}$ are the $i_{th}$ and $j_{th}$ elements in $\bm{\rho}^{-}$. 

 Denote $|\bm{x}|$ as the dimension of vector $\bm{x}$. In the DA operation problem, there are $|\overline{\bm{p}}^{\text{DA}}|$ partitions. The first DA partition is,
 \begin{equation}\label{DAp1}
     0 \leq \hat{y}_{d,\tau} \leq \overline{p}^{\text{DA}}_1
 \end{equation}
 When the forecast belongs to the first partition, the cheapest generator is dispatched. The $o_{th} > 1$ DA partition is,
 \begin{equation}\label{DApo}
     \sum_{k=1}^{o-1}\overline{p}_{k}^{\text{DA}} \leq \hat{y}_{d,\tau} \leq \sum_{k=1}^{o}\overline{p}_{k}^{\text{DA}}
 \end{equation}
 
 In the RT operation problem, when $y_{d,\tau}-\hat{y}_{d,\tau} >0 $, there are $|\overline{\bm{p}}^{+}|$ polyhedral partitions. The first partition is,
\begin{equation}\label{10}
0 \leq y_{d,\tau}-\hat{y}_{d,\tau} \leq \overline{p}_{1}^{+}
\end{equation}

 The $i_{th} > 1$ partition is as follows,
\begin{equation}\label{11}   
\sum_{k=1}^{i-1}\overline{p}_{k}^{+} \leq y_{d,\tau}-\hat{y}_{d,\tau} \leq \sum_{k=1}^{i}\overline{p}_{k}^{+}
\end{equation}

Likewise, when  $y_{d,\tau}-\hat{y}_{d,\tau} <0 $, there are $|\overline{\bm{p}}^{-}|$ polyhedral partitions. The first partition is,
\begin{equation}\label{12}
-\overline{p}_{1}^{-} \leq 
y_{d,\tau}-\hat{y}_{d,\tau} \leq 0
\end{equation}
 
 The $j_{th} >1$ partition is as follows,
\begin{equation}\label{13}
-\sum_{k=1}^{j}\overline{p}_{k}^{-} \leq y_{d,\tau}-\hat{y}_{d,\tau} \leq -\sum_{k=1}^{j-1}\overline{p}_{k}^{-}
\end{equation}

To sum up, the DA partitions are in \eqref{DAp1} and \eqref{DApo}, whose number is $|\overline{\bm{p}}^{\text{DA}}|$. The RT partitions are in \eqref{10}-\eqref{13}, whose number is $|\overline{\bm{p}}^{+}|+|\overline{\bm{p}}^{-}|$. To ease the discussion, we group the DA partitions in \eqref{DAp1} and \eqref{DApo} into the set $\{R_o^{\text{DA}}\}_{o=1}^{|\overline{\bm{p}}^{\text{DA}}|}$, where $R_o^{\text{DA}}$ is a DA partition, and we group the RT partitions in \eqref{10}-\eqref{13} into the set $\{R_n^{\text{RT}}\}_{n=1}^{|\overline{\bm{p}}^{+}|+|\overline{\bm{p}}^{-}|}$, where $R_n^{\text{RT}}$ is a RT partition.

\subsection{Function between the Forecast and the Overall Cost} 
For each of the partition $R_o^{\text{DA}}$ in $\{R_o^{\text{DA}}\}_{o=1}^{|\overline{\bm{p}}^{\text{DA}}|}$, $R_n^{\text{RT}}$ in $\{R_n^{\text{RT}}\}_{n=1}^{|\overline{\bm{p}}^{+}|+|\overline{\bm{p}}^{-}|}$, we sample  $\hat{y}_{d,\tau} \in R_o^{\text{DA}}$, $y_{d,\tau}-\hat{y}_{d,\tau} \in R_n^{\text{RT}}$ from $R_o^{\text{DA}}$ and $R_n^{\text{RT}}$. Then, we plug the sampled $\hat{y}_{d,\tau} \in R_o^{\text{DA}}$ into the DA dual problem \eqref{1RT}, and obtain the DA dual solutions $\{\lambda_{d,\tau}^{*o},\overline{\bm{\nu}}_{d,\tau}^{*o}\}$, which are the output of the constant function between the optimal DA dual solutions and forecast $\hat{y}_{d,\tau}$. After that, we plug the sampled $y_{d,\tau}-\hat{y}_{d,\tau} \in R_n^{\text{RT}}$ into the RT dual problem \eqref{dualRT}, and obtain the RT dual solutions $\{\gamma^{*n}_{d,\tau},\overline{\bm{\mu}}^{*n}_{d,\tau},\overline{\bm{\eta}}^{*n}_{d,\tau}\}$,  which are the output of the constant function between the optimal RT dual solutions and forecast deviation $y_{d,\tau}-\hat{y}_{d,\tau}$. By plugging the optimal dual solutions into the dual objectives \eqref{1RT(a)}, the linear function between the optimal DA objective and $\hat{y}_{d,\tau} \in R_o^{\text{DA}}$ can be obtained,
\begin{equation}\label{DAdualobjfunc}
\lambda_{d,\tau}^{*o}\cdot \hat{y}_{d,\tau}-\overline{\bm{\nu}}_{d,\tau}^{*o,\top}\overline{\bm{p}}^{\text{DA}},\forall \hat{y}_{d,\tau} \in R_o^{\text{DA}}
\end{equation}

Likewise, the linear function between the optimal RT objective and $y_{d,\tau}-\hat{y}_{d,\tau} \in R_n$ can be obtained as follows,  
\begin{equation}\label{dualobjfunc}
   \gamma^{*n}_{d,\tau}(y_{d,\tau}-\hat{y}_{d,\tau})-\overline{\bm{\mu}}_{d,\tau}^{*n,\top}\overline{\bm{p}}^{-}-\overline{\bm{\eta}}_{d,\tau}^{*n,\top}\overline{\bm{p}}^{+},\forall y_{d,\tau}-\hat{y}_{d,\tau} \in R_n^{\text{RT}}
\end{equation}

With \eqref{DAdualobjfunc},\eqref{dualobjfunc}, the linear function between the forecast $\hat{y}_{d,\tau}$ and the overall operation cost defined in the partitions $R_o^{\text{DA}},R_n^{\text{RT}}$ can be obtained,
\begin{subequations}\label{overallfunc}
    \begin{align}
        &\eqref{DAdualobjfunc}+\eqref{dualobjfunc}\label{overallfunca}\\
        &\hat{y}_{d,\tau} \in R_o^{\text{DA}},y_{d,\tau}-\hat{y}_{d,\tau} \in R_n^{\text{RT}}
    \end{align}
\end{subequations}

Gathering the function defined in each DA partition $R_o^{\text{DA}}$ and each RT partition $R_n^{\text{RT}}$, the function between the forecast $\hat{y}_{d,\tau}$ and the overall operation cost is a piecewise linear function with $|\overline{\bm{p}}^{\text{DA}}| \times (|\overline{\bm{p}}^{+}|+|\overline{\bm{p}}^{-}|)$ segments. We use the symbol $\{\{\eqref{overallfunca}\}_{n=1}^{|\overline{\bm{p}}^{+}|+|\overline{\bm{p}}^{-}|}\}_{o=1}^{|\overline{\bm{p}}^{\text{DA}}|}$ to denote this function. We have the following proposition for the function $\{\{\eqref{overallfunca}\}_{n=1}^{|\overline{\bm{p}}^{+}|+|\overline{\bm{p}}^{-}|}\}_{o=1}^{|\overline{\bm{p}}^{\text{DA}}|}$
\begin{prop}
    The piecewise linear function between the forecast and the overall operation cost, i.e., $\{\{\eqref{overallfunca}\}_{n=1}^{|\overline{\bm{p}}^{+}|+|\overline{\bm{p}}^{-}|}\}_{o=1}^{|\overline{\bm{p}}^{\text{DA}}|}$, is a convex function.
\end{prop}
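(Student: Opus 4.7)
The plan is to exhibit the piecewise linear function of interest as a sum of two convex functions, one associated with the DA dual and one with the RT dual, and then invoke the standard fact that convexity is preserved under sums and under composition with affine maps. Concretely, write $\eqref{overallfunca}$ as $h_{\text{DA}}(\hat{y}_{d,\tau}) + h_{\text{RT}}(y_{d,\tau}-\hat{y}_{d,\tau})$, where $h_{\text{DA}}(\cdot)$ denotes the optimal objective value of the DA dual problem \eqref{1RT} viewed as a function of the right-hand-side parameter $\hat{y}_{d,\tau}$, and $h_{\text{RT}}(\cdot)$ denotes the optimal objective value of the RT dual problem \eqref{dualRT} viewed as a function of the right-hand-side parameter $y_{d,\tau}-\hat{y}_{d,\tau}$.

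First, I would check that both dual problems fit the template of \eqref{lp}--\eqref{dualG} required by Theorem \ref{theom1}: the forecast $\hat{y}_{d,\tau}$ enters \eqref{1} only through the right-hand side of the equality constraint \eqref{1(b)}, and analogously the forecast deviation $y_{d,\tau}-\hat{y}_{d,\tau}$ enters \eqref{2} only through the right-hand side of \eqref{2(b)}. Theorem \ref{theom1} then directly yields that both $h_{\text{DA}}$ and $h_{\text{RT}}$ are piecewise linear and convex functions of their respective scalar parameters, with the polyhedral partitions of their domains being exactly the ones constructed in \eqref{DAp1}--\eqref{DApo} and \eqref{10}--\eqref{13}.

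Second, since in the training data $y_{d,\tau}$ is a fixed realization, the map $\hat{y}_{d,\tau} \mapsto y_{d,\tau}-\hat{y}_{d,\tau}$ is affine. Convexity is preserved under composition with an affine function, so $\hat{y}_{d,\tau} \mapsto h_{\text{RT}}(y_{d,\tau}-\hat{y}_{d,\tau})$ is convex. Adding the convex function $h_{\text{DA}}(\hat{y}_{d,\tau})$ preserves convexity, which yields the claim.

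The main obstacle is not analytical but bookkeeping: I must verify that gluing together the piece $\eqref{DAdualobjfunc}+\eqref{dualobjfunc}$ across the $|\overline{\bm{p}}^{\text{DA}}| \times (|\overline{\bm{p}}^{+}|+|\overline{\bm{p}}^{-}|)$ rectangular cells indexed by $(R_o^{\text{DA}}, R_n^{\text{RT}})$ really produces the function $h_{\text{DA}}(\hat{y}_{d,\tau}) + h_{\text{RT}}(y_{d,\tau}-\hat{y}_{d,\tau})$ on the whole real line, rather than only on a union of strips. Because $h_{\text{DA}}$ depends solely on $\hat{y}_{d,\tau}$ and $h_{\text{RT}}$ solely on the forecast deviation, the product-partition structure exactly matches the sum, so no issue arises; the coefficients $\lambda_{d,\tau}^{*o}$ and $\gamma_{d,\tau}^{*n}$ within each cell are precisely the subgradients of $h_{\text{DA}}$ and $h_{\text{RT}}$ identified by Theorem \ref{theom1}, confirming that the global function is indeed their sum and hence convex.
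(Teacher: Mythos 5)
Your proof follows essentially the same route as the paper's: invoke Theorem \ref{theom1} to establish that the DA and RT dual optimal-value functions are piecewise linear and convex in their respective parameters, then conclude that \eqref{overallfunca} is convex as their sum. You are in fact slightly more careful than the paper's one-line argument, since you explicitly note that the RT piece is convex in $y_{d,\tau}-\hat{y}_{d,\tau}$ and must be composed with the affine map $\hat{y}_{d,\tau}\mapsto y_{d,\tau}-\hat{y}_{d,\tau}$ before being added to the DA piece, and you verify that the product partition over $(R_o^{\text{DA}},R_n^{\text{RT}})$ correctly reassembles the global function --- both points the paper leaves implicit.
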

\begin{proof}
    With Theorem \ref{theom1}, $\{\eqref{DAdualobjfunc}\}_{o=1}^{|\overline{\bm{p}}^{\text{DA}}|}$ and $\{\eqref{dualobjfunc}\}_{n=1}^{|\overline{\bm{p}}^{+}|+|\overline{\bm{p}}^{-}|}$ are convex functions. $\{\{\eqref{overallfunca}\}_{n=1}^{|\overline{\bm{p}}^{+}|+|\overline{\bm{p}}^{-}|}\}_{o=1}^{|\overline{\bm{p}}^{\text{DA}}|}$ is the summation of two convex functions $\{\eqref{DAdualobjfunc}\}_{o=1}^{|\overline{\bm{p}}^{\text{DA}}|}$ and $\{\eqref{dualobjfunc}\}_{n=1}^{|\overline{\bm{p}}^{+}|+|\overline{\bm{p}}^{-}|}$. Therefore, it is convex.
\end{proof}

The input to the function $\{\{\eqref{overallfunca}\}_{n=1}^{|\overline{\bm{p}}^{+}|+|\overline{\bm{p}}^{-}|}\}_{o=1}^{|\overline{\bm{p}}^{\text{DA}}|}$ are the forecast $\hat{y}_{d,\tau}$ and the forecast deviation $y_{d,\tau}-\hat{y}_{d,\tau}$, while its output is the overall operation cost as stated in \eqref{overallcostb}. The function is illustrated in Fig. \ref{costillu}. In the next section, we will leverage the convexity of this function to perform the forecast model parameter estimation.

\begin{figure}[t]
  \centering
  \includegraphics[scale=0.35]{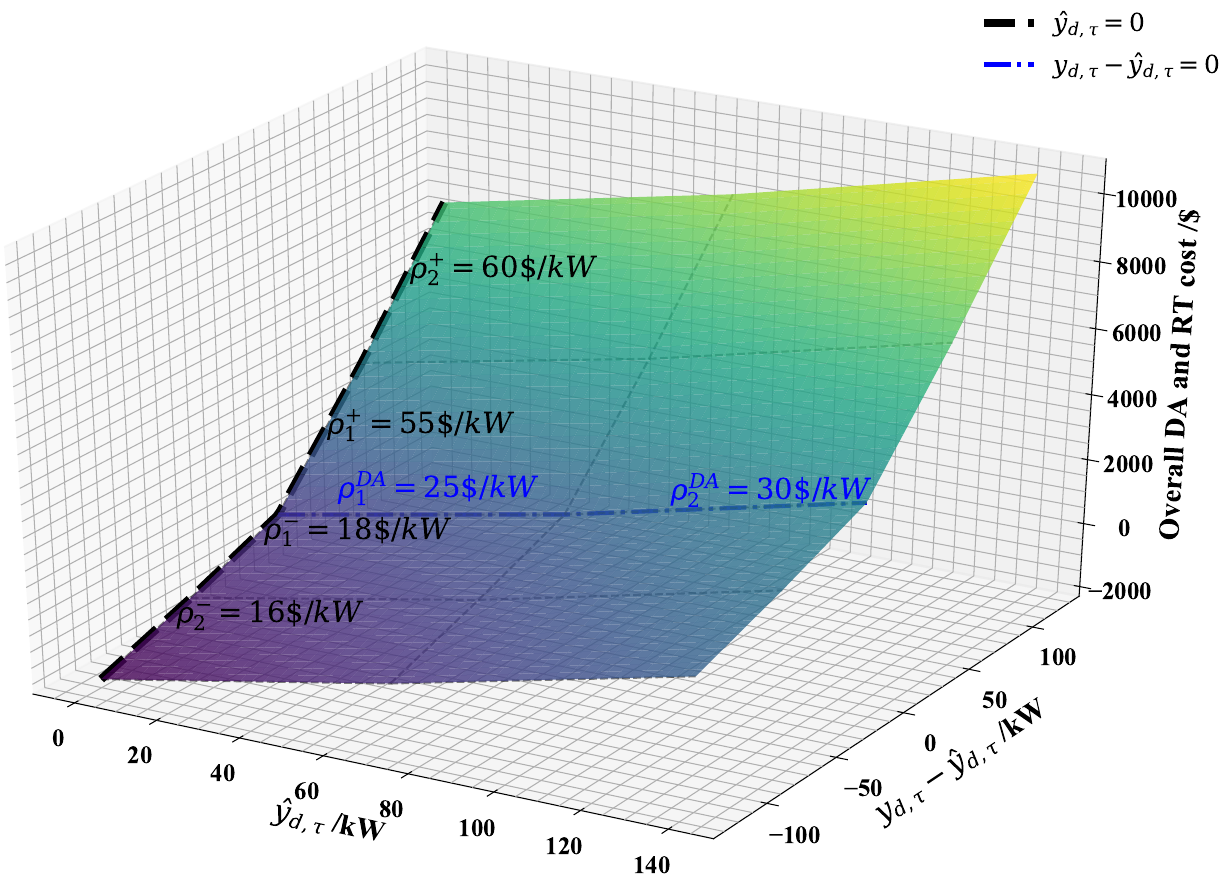}
  \caption{\scriptsize Illustration of the function between the overall operation cost and the forecast. In the DA operation, there are two generators with marginal costs $\rho_1^{\text{DA}}=25 \$/kW,\rho_2^{\text{DA}}=30 \$/kW$. In the RT operation, flexible resources 1 and 2 addressing energy deficit with marginal cost $\rho_1^{+}=55 \$/kW,\rho_2^{+}=60 \$/kW$. Flexible resources 3 and 4 address energy surplus with marginal utility $\rho_1^{-}=18 \$/kW,\rho_2^{-}=16 \$/kW$. The function has 8 segments.
 }
\label{costillu}
\end{figure}

\section{Solution Strategy}

With the function $\{\{\eqref{overallfunca}\}_{n=1}^{|\overline{\bm{p}}^{+}|+|\overline{\bm{p}}^{-}|}\}_{o=1}^{|\overline{\bm{p}}^{\text{DA}}|}$, we can remove the lower level \eqref{Dbilevel_e}, and equivalently transform \eqref{Dbilevel} as,
\begin{subequations}\label{Sbilevel}
\begin{alignat}{2} 
&\underset{{\Theta},\alpha}{\mathop{\min}}&& \  \alpha+\nonumber\\
&&& \frac{1}{D\cdot T\cdot(1-\beta)}\sum_{d=1}^{D}\sum_{\tau=1}^{T}[\{\{\eqref{overallfunca}\}_{n=1}^{|\overline{\bm{p}}^{+}|+|\overline{\bm{p}}^{-}|}\}_{o=1}^{|\overline{\bm{p}}^{\text{DA}}|}-\alpha]^+\label{Sbilevel_a}\\
    & \text{s.t.} 
    && \eqref{bilevel_b}. \notag
\end{alignat}
\end{subequations}
Since $\{\{\eqref{overallfunca}\}_{n=1}^{|\overline{\bm{p}}^{+}|+|\overline{\bm{p}}^{-}|}\}_{o=1}^{|\overline{\bm{p}}^{\text{DA}}|}$ is a convex function, the objective \eqref{Sbilevel_a} is convex; see \cite{rockafellar2000optimization}. Here, we use the linear regression model in \eqref{bilevel_b} \cite{morales2023prescribing,dvorkin2024regression}. Although the model is linear in the context $\bm{s}_{d,\tau}$, the nonlinearity can be captured by the transformed context $\bm{s}_{d,\tau}$, which can be nonlinear kernel functions outputs  or neural networks hidden layers outputs. With the linear forecast model, the constraints \eqref{bilevel_b} are convex. The parameter estimation problem in \eqref{Sbilevel} is a convex program.

For each times-slot $\tau$, we can represent the piecewise linear and convex function $\{\{\eqref{overallfunca}\}_{n=1}^{|\overline{\bm{p}}^{+}|+|\overline{\bm{p}}^{-}|}\}_{o=1}^{|\overline{\bm{p}}^{\text{DA}}|}$ as the maximum of affine functions (where the number of affine functions is $|\overline{\bm{p}}^{\text{DA}}| \times (|\overline{\bm{p}}^{+}|+|\overline{\bm{p}}^{-}|)$); see \cite{boyd2004convex},
\begin{equation}\label{aux}
    \nu_{d,\tau} \geq \eqref{overallfunca}, \forall o=1,...,|\overline{\bm{p}}^{\text{DA}}|, \forall n=1,...,|\overline{\bm{p}}^{+}|+|\overline{\bm{p}}^{-}|
\end{equation}
where $\nu_{d,\tau}$ is the introduced auxiliary variable. The objective \eqref{Sbilevel_a} therefore becomes,
\begin{equation}\label{objt}
    \alpha+\frac{1}{D\cdot T\cdot(1-\beta)}\sum_{d=1}^{D}\sum_{\tau=1}^{T}[\nu_{d,\tau}-\alpha]^+
\end{equation}

Next, we introduce another auxiliary variable $\chi_{d,\tau} \geq 0$ to cope with $[\nu_{d,\tau}-\alpha]^+$, which takes the maximum between $\nu_{d,\tau}-\alpha$ and 0. The objective \eqref{objt} is further transformed as,
\begin{subequations}\label{objta}
\begin{align}
    &\alpha+\frac{1}{D\cdot T\cdot(1-\beta)}\sum_{d=1}^{D}\sum_{\tau=1}^{T}\chi_{d,\tau}\\
    & \text{s.t.}\ \chi_{d,\tau} \geq \nu_{d,\tau}-\alpha\label{objtab}\\
    & \ \chi_{d,\tau} \geq 0\label{objtac}
\end{align}
\end{subequations}
With \eqref{objta}, the parameter estimation \eqref{Sbilevel} can be equivalently rewritten as,
\begin{subequations}\label{Fbilevel}
\begin{alignat}{2} 
&\underset{{\Theta},\alpha}{\mathop{\min}}&&\ \alpha+\frac{1}{D\cdot T\cdot(1-\beta)}\sum_{d=1}^{D}\sum_{\tau=1}^{T}\chi_{d,\tau}\\
    & \text{s.t.} 
    && \eqref{bilevel_b},\eqref{objtab},\eqref{objtac},\eqref{aux}
\end{alignat}
\end{subequations}
which is a linear program, ready to be solved by off-the-shelf solvers. The global optimal solutions can therefore be ensured.

\section{Case Study}

We consider a VPP with 2 generators, whose energy dispatch needs to be settled in the DA operation, and 2 flexible resources for addressing the forecast deviation in the RT operation. Their technical data is provided in \cite{zhang2024repository}. The yearly net demand is used, which is the demand minus the wind power. To issue the forecast at time-slot $\tau$ on day $d$, the context consists of the net demand record at time-slot $\tau$ on the days $d-1,d-2,d-3$ and the record at time-slot $\tau+1$ on the day $d-2$, along with the numerical weather prediction (the estimated wind speed and direction at 10m and 100m altitude). The data can be found in \cite{zhang2024repository}.

Here, we use three benchmarks for comparison. The first one is trained to minimize mean squared error, which provides the prediction for the expected net demand. The second one is the value-oriented forecasting approach to minimize the expected operation cost, which is designed for a risk-neural operator. Lastly, the stochastic program to minimize $\text{CVaR}_{\beta}$, with 200 wind power scenarios obtained by k-nearest-
neighbors, is considered. The three benchmarks are denoted as \textbf{Qua-E}, \textbf{Val-N}, and \textbf{Sto-OPT}, respectively. The Qua-E, Val-N, and the proposed approach use the same linear forecast model. Sto-OPT serves as an ideal benchmark and is expected to yield the best results. The programs are solved by Gurobi.

We consider three evaluation metrics on the test set, i.e., Root Mean Squared Error (RMSE), average overall operation cost, and the average high operation cost above the $\beta$-quantile on the test set. Let $D_{\text{test}}$ be the number of days on the test set. The average overall operation cost is,
\begin{equation}\label{aop}
    \frac{1}{D_{\text{test}} \cdot T}\sum_{d=1}^{D_{\text{test}}}\sum_{\tau=1}^T\eqref{overallcosta}
\end{equation}

Let $\{\eqref{overallcosta}|\eqref{overallcosta} > \beta\text{-quantile}\}$ be the set of the overall costs above the $\beta$-quantile, whose cardinality is denoted as $M_{abov}$. The average high operation cost above the $\beta$-quantile is,
\begin{equation}\label{abo}
    \frac{1}{M_{abov}}\sum \{\eqref{overallcosta}|\eqref{overallcosta} > \beta\text{-quantile}\}
\end{equation}

\subsection{Operational Advantage}

We investigate the performance of our approach and the three benchmarks. The RMSE, average operation cost \eqref{aop}, and average high operation cost above the $\beta$-quantile \eqref{abo}, along with the computation time on the test/training sets are shown in Table \ref{tab1}. The $\beta$ is set as 0.5. The results demonstrate that the proposed approach achieves the lowest cost measured by \eqref{abo} (which represents the average of high operation costs) among Qua-E and Val-N, and is very close to the ideal benchmark Sto-OPT. Compared to Val-N, the cost reduction measured by \eqref{abo} reaches 1.9\%. Additionally, the proposed approach has less computation time on the test set, compared to Sto-OPT, which demonstrates the computational efficiency. Also, the training time only takes 8 \unit{s}, since the linear program \eqref{Fbilevel} is efficient to solve. The expected operation cost \eqref{aop} resulting from the proposed approach is in the middle of Val-N and Qua-E. This is reasonable since Qua-E completely ignores the decision value during training, while Val-N uses the expected operation cost \eqref{aop} as the training objective. The metric RMSE indicates that the proposed approach achieves the worst accuracy. This can be further demonstrated by the forecast profiles in Fig. \ref{profile}. The proposed approach tends to forecast more net demand, to avoid less favorable case of energy deficit, when the marginal cost of flexible resources addressing energy deficit in RT is larger than the marginal utilities of flexible resources addressing energy surplus.

\begin{figure}[ht]
  \centering
  \includegraphics[scale=0.5]{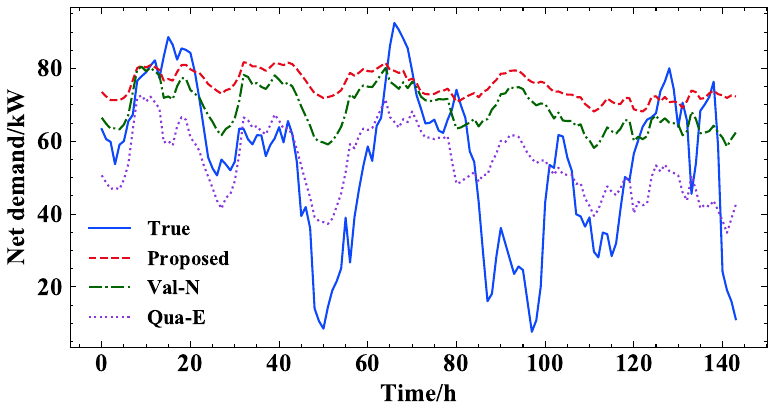}\\
  \caption{The 6-day net demand forecast profiles, when the marginal cost of flexible resources addressing energy deficit in RT is larger than the marginal utilities of flexible resources addressing energy surplus. $\beta$ is set as 0.5.}\label{profile}
\end{figure}

\begin{table}
\scriptsize
\caption{ RMSE, average operation cost \eqref{aop}, average high operation cost  \eqref{abo} and the test/training time of the proposed approach and benchmarks.}\label{tab1} 

\begin{center}
\begin{tabular}{ c  c  c  c c }
\hline\hline
         &\makecell[c]{Proposed} & \makecell[c]{Qua-E} & \makecell[c]{Val-N} &  \makecell[c]{Sto-OPT}\\
\hline
    RMSE (\unit{kW}) & 23 & 19 & 17  & -\\
    Average cost \eqref{aop} (\unit{\$}) & 1762 & 1880 & 1752 & 1754\\
    Average high cost \eqref{abo} (\unit{\$}) & 2111 & 2430 & 2152 & 2102\\
    \makecell[c]{Test time} (\unit{s}) & 2  & 2  & 2  & 60\\
    \makecell[c]{Training time} (\unit{s}) & 8  & 4  & 8  & -\\
\hline\hline
\end{tabular}
\end{center}
\vspace{-2em}
\end{table}

Furthermore, we compare the average high operation cost \eqref{abo} under different risk-averse levels, which is controlled by $\beta$. When $\beta$ is larger, the operator is more averse to the high costs. The proposed approach achieves the least average high cost among the benchmarks when $\beta$ varies. The cost reduction is more obvious when $\beta$ takes a larger value.

\begin{table}
\scriptsize
\caption{ Average high operation cost  \eqref{abo} of the proposed approach and benchmarks. The cost reduction of the proposed approach compared to Val-N is listed in the last column.}\label{tab1}
\begin{center}
\begin{tabular}{ c  c  c  c c}
\hline\hline
         &\makecell[c]{Proposed} & \makecell[c]{Qua-E} & \makecell[c]{Val-N} & Reduction\\
\hline
    $\beta$=0.3 & 1984 \$ & 2215 \$ & 1997 \$  & 0.7\%\\
    $\beta$=0.5 & 2111 \$ & 2430 \$ & 2152 \$ & 1.9\%\\
    $\beta$=0.7 & 2248 \$ & 2678 \$ & 2362 \$  & 4.8\%\\
\hline\hline
\end{tabular}
\end{center}
\vspace{-2em}
\end{table}

\section{Conclusions}
This paper explains how virtual power plants can mitigate the risk of high costs via risk-aware value-oriented net demand forecasting. We propose a risk-aware parameter estimation approach for training a forecast model. The objective is to minimize the $\text{CVar}_{\beta}$ of the overall operation cost, which is a measure of the average high operation cost. Leveraging the structure of the operation problem, we show the function between the overall operation cost and the forecast is a convex function. The parameter estimation can be transformed into a convex program when the forecast model is linear. Compared to the value-oriented forecasts from the perspective of a risk-neutral operator, our approach reduces the average high operation costs by 0.7\%-4.8\%. Future works can be developed by incorporating nonlinear and nonconvex forecast models.

\vspace{-0.5em}

\bibliographystyle{IEEEtran}
\bibliography{IEEEabrv,mylib}

\begin{thebibliography}{10}
\providecommand{\url}[1]{#1}
\csname url@samestyle\endcsname
\providecommand{\newblock}{\relax}
\providecommand{\bibinfo}[2]{#2}
\providecommand{\BIBentrySTDinterwordspacing}{\spaceskip=0pt\relax}
\providecommand{\BIBentryALTinterwordstretchfactor}{4}
\providecommand{\BIBentryALTinterwordspacing}{\spaceskip=\fontdimen2\font plus
\BIBentryALTinterwordstretchfactor\fontdimen3\font minus \fontdimen4\font\relax}
\providecommand{\BIBforeignlanguage}[2]{{%
\expandafter\ifx\csname l@#1\endcsname\relax
\typeout{** WARNING: IEEEtran.bst: No hyphenation pattern has been}%
\typeout{** loaded for the language `#1'. Using the pattern for}%
\typeout{** the default language instead.}%
\else
\language=\csname l@#1\endcsname
\fi
#2}}
\providecommand{\BIBdecl}{\relax}
\BIBdecl

\bibitem{alipour2020novel}
M.~Alipour, J.~Aghaei, M.~Norouzi, T.~Niknam, S.~Hashemi, and M.~Lehtonen, ``A novel electrical net-load forecasting model based on deep neural networks and wavelet transform integration,'' \emph{Energy}, vol. 205, p. 118106, 2020.

\bibitem{morales2013integrating}
J.~M. Morales, A.~J. Conejo, H.~Madsen, P.~Pinson, and M.~Zugno, \emph{Integrating renewables in electricity markets: operational problems}.\hskip 1em plus 0.5em minus 0.4em\relax Springer Science \& Business Media, 2013, vol. 205.

\bibitem{murphy1993good}
A.~H. Murphy, ``What is a good forecast? an essay on the nature of goodness in weather forecasting,'' \emph{Weather and forecasting}, vol.~8, no.~2, pp. 281--293, 1993.

\bibitem{morales2023prescribing}
J.~M. Morales, M.~Mu{\~n}oz, and S.~Pineda, ``Prescribing net demand for two-stage electricity generation scheduling,'' \emph{Operations Research Perspectives}, vol.~10, p. 100268, 2023.

\bibitem{donti2017task}
P.~Donti, B.~Amos, and J.~Z. Kolter, ``Task-based end-to-end model learning in stochastic optimization,'' \emph{Advances in neural information processing systems}, vol.~30, 2017.

\bibitem{zhang2024improving}
Y.~Zhang, H.~Wen, Y.~Bian, and Y.~Shi, ``Improving sequential market clearing via value-oriented renewable energy forecasting,'' \emph{arXiv preprint arXiv:2405.09004}, 2024.

\bibitem{rockafellar2000optimization}
R.~T. Rockafellar, S.~Uryasev \emph{et~al.}, ``Optimization of conditional value-at-risk,'' \emph{Journal of risk}, vol.~2, pp. 21--42, 2000.

\bibitem{boyd2004convex}
S.~P. Boyd and L.~Vandenberghe, \emph{Convex optimization}.\hskip 1em plus 0.5em minus 0.4em\relax Cambridge university press, 2004.

\bibitem{borrelli2003geometric}
F.~Borrelli, A.~Bemporad, and M.~Morari, ``Geometric algorithm for multiparametric linear programming,'' \emph{Journal of optimization theory and applications}, vol. 118, pp. 515--540, 2003.

\bibitem{dvorkin2024regression}
V.~Dvorkin, ``Regression equilibrium in electricity markets,'' \emph{arXiv preprint arXiv:2405.17753}, 2024.

\bibitem{zhang2024repository}
``Mitigating high costs via risk-aware value-oriented net demand forecasting,'' \url{https://github.com/hjj1998/Mitigating-High-Costs-via-Risk-Aware-Value-Oriented-Net-Demand-Forecasting}, June 2024, gitHub repository.

\end{thebibliography}

\end{document}